\documentclass[submission,copyright,creativecommons]{eptcs}
\providecommand{\event}{SOS 2007} 
\usepackage{amssymb}
\usepackage{amsmath}
\usepackage{stmaryrd}
\usepackage{xcolor}
\usepackage{iftex}
\usepackage{amsthm}
\theoremstyle{plain}
\newtheorem{theorem}{Theorem}
    
\newtheorem{proposition}[theorem]{Proposition}
\theoremstyle{definition}
\newtheorem{definition}{Definition}
\newtheorem{example}{Example}
\newcommand{\truthset}[3]{\llbracket{#1}\rrbracket_{#2}^{#3}}
\newcommand{\truthsett}[3]{[{#1}]_{#2}^{#3}}

\ifpdf
  \usepackage{underscore}         
  \usepackage[T1]{fontenc}        
\else
  \usepackage{breakurl}           
\fi

\title{On {\L}ukasiewicz Topological Semantics}
\author{Muhammad Afaq Khan \quad Manuel A. Martins \quad Alexandre Madeira \quad Babar Zaman
\institute{CIDMA \& Department of Mathematics, University of Aveiro
Aveiro, Portugal}
\email{\{mafaqkhan,martins,madeira,babar.zaman\}@ua.pt}
}
\def\titlerunning{On {\L}ukasiewicz Topological Semantics}
\def\authorrunning{Khan et al.}
\begin{document}
\maketitle

\begin{abstract}
In this paper, we propose fuzzy topological semantics for {\L}ukasiewicz valued modal logic, based on the standard {\L}ukasiewicz algebra. Within this setting, formulas are interpreted as fuzzy sets of truth values, while the modal operators are realized through fuzzy interior and closure operators. We define a Kripke-induced fuzzy Alexandroff topology and establish a precise correspondence between {\L}ukasiewicz fuzzy Kripke semantics and its topological counterpart. The proposed topological semantics validates the $S4$ modal axioms that include reflexivity, transitivity, monotonicity, normality, and the necessitation of truth.
\end{abstract}

\section{Introduction}
Modal logic studies necessity, possibility, and related intensional notions through the operators $\Box$ and $\Diamond$, and Kripke semantics interprets these operators by means of an accessibility relation between possible worlds (see \cite{blackburn2001modal, kripke1963semantical}). Relational Kripke semantics serves as the conventional basis for interpreting modal logic. However, when this framework is carried over to the many-valued setting, a number of technical obstacles arise. Within many-valued modal semantics, the truth value of a formula is no longer confined to binary options. Instead, it is drawn from an algebraic structure such as the real unit interval $[0,1]$, and accessibility relations may also be graded. This line of research goes back to many-valued modal logics (see \cite{fitting1991many}) and has been developed through fuzzy Kripke models, modal systems over finite lattices, and modal expansions of Gödel, Product, and {\L}ukasiewicz logics, we refer the interested reader to \cite{bou2009characterizing,bou2011minimum,jain2020fuzzy,vidal2017modal}. Note, however, that this semantic adjustment affects several aspects, including well-known properties of the modalities, such as duality and the normality axiom, which do not hold automatically over arbitrary many-valued Kripke frames, as discussed by Amanda Vidal in \cite{vidal2015modal}. To restore soundness and completeness, one is therefore led to impose further conditions on the semantics, for instance crispness or idempotence of the accessibility relation (see \cite{vidal2021transitive, hansoul2013extending}).\\
Fuzzy topological semantics discussed in \cite{khan2025fuzzytopologicalsemantics} presents a natural alternative to analyze modality in the many-valued context. The classical topological semantics interpret the operators $\Box$ and $\Diamond$ as the interior and closure operators of a topology (see \cite{mckinsey1944algebra, goldblatt2014topoi}). This interpretation lends modal logic a spatial, neighbourhood-oriented meaning. The classical account nevertheless remains crisp, since every formula is assigned one of only two values at each point. It has no means of recording truth value of a formula holds to degree, say 0.7 or 0.4, at a point. Fuzzy topology removes this restriction by admitting fuzzy open sets in place of ordinary ones, so that both membership and openness become matters of degree. For a comprehensive introduction to fuzzy topology, we direct the reader to \cite{chang1968fuzzy, goguen1973fuzzy}. The result is a semantics in which spatial approximation and many-valued truth can be expressed together.\\
In our previous work \cite{khan2025fuzzytopologicalsemantics}, we developed a fuzzy topological semantics for Gödel modal logic and established its correspondence with Gödel fuzzy Kripke semantics. Our work showed that fuzzy topological semantics can provide a meaningful spatial interpretation of many-valued modal reasoning. It also suggested an important observation that the fuzzy topological structure must reflect the algebraic behaviour of the underlying many-valued logic. In the Gödel setting, conjunction is given by the idempotent lattice meet $\wedge$, so the usual closure of fuzzy open sets under finite meets is naturally compatible with the logic. This compatibility does not directly carry over to logics with non-idempotent operations.\\
Guided by this observation, the present paper develops a fuzzy topological semantics for modal logic over the standard Łukasiewicz algebra. Łukasiewicz logic differs from Gödel logic in an essential aspect, since it relies on non-idempotent operations. The most prominent of these is the strong conjunction $\odot$, which is accompanied by the bounded sum $\oplus$, an implication, and the involutive negation $\neg a=1-a$ (see \cite{hajek2001metamathematics, cignoli2013algebraic}). A topological semantics for the Łukasiewicz modal language therefore cannot rest on lattice meet and arbitrary join alone. It must also contain strong conjunction. Rather than adjusting the accessibility relation in order to recover ordinary modal behaviour, we modify the topological structure itself. Formulas in the proposed framework are read as fuzzy subsets of a space, namely as functions $[0,1]^X$. The necessity operator $\Box$ is given by a Łukasiewicz fuzzy interior operator. Because Łukasiewicz negation is involutive, the possibility operator can be defined dually through $ \Diamond\varphi:=\neg\Box\neg\varphi$.\\
The main contribution of this paper is the construction of a sound fuzzy topological semantics for {\L}ukasiewicz modal logic. We establish a correspondence between {\L}ukasiewicz fuzzy Kripke semantics and the proposed fuzzy topological semantics by showing that each suitable {\L}ukasiewicz fuzzy Kripke frame induces an associated fuzzy topological space, and that the interpretation of modal formulas is preserved under this construction. We also prove the validity of the standard $S4$ principles, including the interior principles corresponding to reflexivity and transitivity, together with normality. Thus, the paper extends fuzzy topological semantics from the Gödel case to the {\L}ukasiewicz case in a way that is both algebraically faithful and topologically meaningful.\\

\noindent\textbf{Outline of the paper.} The paper is organized as follows. In Section \ref{section} we recall the topological semantics, the basic operations of the standard Łukasiewicz algebra and the modal language, together with fuzzy topological spaces and their interior and closure operators. In Section \ref{main section} we present Łukasiewicz fuzzy topological model and explain how formulas are interpreted under this construction. Next, the correspondence with fuzzy Kripke semantics is established. The final part of the paper proves the core interior and closure properties, confirms the validity of the S4-style modal principles, offers a finite counter example, and discuss possible extensions.

\section{Preliminaries}\label{section}
In this section, we recall some of the foundational concepts and notions that are necessary to understand the core theme of the paper. We assume that the reader is familiar with the basic notions of modal logic and classical Kripke semantics. Therefore, we do not recall these topics in full detail. This includes fundamental definitions of related to Modal language, topological models, fuzzy modal logic, {\L}ukasiewicz and {\L}ukasiewicz fuzzy operations, that will lead us to construct the definition of {\L}ukasiewicz fuzzy topological models.

\textbf{Topological semantics for Modal Logic.}
Topological semantics offers a natural and flexible interpretation of modal logic. Unlike Kripke semantics, where modal operators are interpreted through an accessibility relation between worlds, topological semantics interprets $\Box$ as an interior operator and $\Diamond$ as a closure operator (see \cite{aiello2007handbook, van2007modal}). This gives modal formulas a spatial meaning: necessity corresponds to truth in an open neighbourhood, while possibility corresponds to truth in the closure or approximation of a region. This approach is useful because many modal axioms can be explained directly through topological properties. In particular, the axioms of $S4$ arise naturally from the basic behaviour of the interior operator. Moreover, topological semantics is well suited for settings where reasoning depends on neighbourhoods, approximation, continuity, or regions rather than only on relations between individual worlds. For fuzzy and many-valued modal logics, this becomes even more meaningful, since formulas can be interpreted as fuzzy subsets and modal operators can be studied using fuzzy interior and closure operations.
\begin{definition}[Modal language {\cite{aiello2007handbook}}]
Let $Prop$ be a countable set of propositional letters. The basic modal language $\mathcal{L}$ is generated by the following grammar:
$$
\varphi,\psi ::= p \mid \neg\varphi \mid \varphi\wedge\psi
\mid \varphi\vee\psi \mid \varphi\to\psi
\mid \Box\varphi \mid \Diamond\varphi,
\qquad p\in Prop.
$$
Here, $\Box$ and $\Diamond$ denote the necessity and possibility modal operators, respectively.
\end{definition}
\begin{definition}[Topological model {\cite{aiello2007handbook}}]
A \emph{topological model}, or simply a \emph{topo-model}, is a triple
$M=(X,\tau,V)$, where $(X,\tau)$ is a topological space and
$V:Prop\longrightarrow \mathcal{P}(X)$ is a valuation function assigning to each propositional letter a subset of $X$.
\end{definition}
\begin{definition}[Topological satisfaction relation]
Let $M=(X,\tau,V)$ be a topological model. The satisfaction relation $M,x\models\varphi$, for $x\in X$, is defined recursively as follows:
\begin{itemize}
\item $M,x\models p$ iff $x\in V(p)$, for $p\in Prop$;
\item $M,x\models \neg\varphi$ iff $M,x\not\models \varphi$;
\item $M,x\models \varphi\wedge\psi$ iff $M,x\models\varphi$ and $M,x\models\psi$;
\item $M,x\models \varphi\vee\psi$ iff $M,x\models\varphi$ or $M,x\models\psi$;
\item $M,x\models \Box\varphi$ iff
$x\in \operatorname{Int}({y\in X:M,y\models\varphi})$;
\item $M,x\models \Diamond\varphi$ iff
$x\in \operatorname{Cl}({y\in X:M,y\models\varphi}).$
\end{itemize}
\end{definition}

A formula $\varphi$ is valid in a topological model $M$ if for all $x \in X$, the condition $M,x \vDash \varphi$ holds. Moreover, it is valid in topological space $(X,\tau)$ if it is valid
in any topological model $M$ over $(X,\tau)$.

\medskip

\noindent
Note: Every topological model satisfies the axioms of S4 modal logic \cite{van2007modal}, i.e. the minimum collection of modal formulas that include all classical tautologies. The equivalent axiomatization of S4 relevant to our study is defined by the axioms: $(N)\ \Box \top$, $(T)\ \Box p \to p$, $(R)\ \Box(p \wedge q) \leftrightarrow \Box p \wedge \Box q$
and $(4)\ \Box p \to \Box\Box p$ and is closed under modus ponens, substitution and monotonicity (from $\varphi \to \psi$ derive $\Box\varphi \to \Box\psi$).

\textbf{Łukasiewicz Modal Logic: Kripke semantics.}
Fuzzy modal logics extend modal reasoning by allowing truth values in the interval $[0,1]$ rather than only the classical values $0$ and $1$ (see \cite{fitting1991many,vidal2017modal, madeira2016exercise}). 
The logical behaviour of these systems depends on the chosen $t$-norm $T$, which interprets conjunction, and on its associated residuum $\Rightarrow$, which interprets implication for detailed study reader is referred to \cite{hajek2001metamathematics, esteva2001monoidal}. A $t$-norm is a commutative, associative, and monotone operation $T:[0,1]^2\to[0,1]$ such that $T(1,x)=x$ and $T(0,x)=0$. Important examples are the \L{}ukasiewicz $t$-norm $T(x,y)=\max(0,x+y-1)$, the G\"odel $t$-norm $T(x,y)=\min(x,y)$, and the product $t$-norm $T(x,y)=xy$. For a continuous $t$-norm, the corresponding residuum is defined by
\[
x\Rightarrow y=\max\{z\in[0,1]\mid T(x,z)\leq y\}.
\]
Most fuzzy modal logics have semantics based on continuous $t$-norms and its residuum. In what follows, we consider a generic continuous $t$-norm $T$ and its residuum $\Rightarrow$. We focus on the fuzzy modal logic over {\L}ukasiewicz algebra.

\begin{definition}[{\L}ukasiewicz algebraic operations \cite{hajek2001metamathematics}]
Let $[0,1]$ be the real unit interval. The standard {\L}ukasiewicz operations are defined, for all $a, b \in [0,1]$, as follows:
\begin{align*}
    a\oplus b=\min(1, a+b), \qquad& a\odot b=\max(0, a+b-1),\\
    a\Rightarrow b=\min(1, 1-a+b), \qquad& \neg a=1-a.
\end{align*}
Here $\oplus$ is the {\L}ukasiewicz bounded sum, $\odot$ is the {\L}ukasiewicz strong conjunction, $\Rightarrow$ is the {\L}ukasiewicz implication and $\neg$ is the {\L}ukasiewicz negation. The residuum
condition is $a\odot c \leq b \quad \text{ iff } \quad c \leq a\Rightarrow b$ (see \cite{klement2013triangular}). The negation is involutive (see \cite{chang1958algebraic}), i.e. $\neg\neg a=a$.
\end{definition}
\begin{definition}\label{language}
Let $Prop$ be a set of atomic propositions, the set of modal formulas $Form$ is given by (see \cite{vidal2015modal}):
\begin{center}
     $\varphi,\psi ::= p \mid  \bot \mid \varphi\wedge\psi \mid \varphi\odot\psi \mid  \varphi\rightarrow\psi \mid \Diamond\varphi\mid \Box \varphi$, where $p\in Prop$.
\end{center}
The remaining connectives are defined by
    \begin{align*}
        \neg\varphi:=\varphi &\to \bot, \quad \top:=\bot\to\bot,\\
        \varphi\leftrightarrow\psi&:=(\varphi\to\psi)\land(\psi\to\varphi).
    \end{align*}
\end{definition}
In this language, we distinguish between two conjunctions. The connective $\wedge$ denotes the lattice conjunction, interpreted by the minimum operation, while $\odot$ denotes the strong, or monoidal, conjunction of \L{}ukasiewicz logic, interpreted by $a\odot b=\max(0,a+b-1)$. This distinction is standard in t-norm based many-valued logics and also appears in Vidal's formulation of many-valued modal languages, where both the multiplicative conjunction and the lattice conjunction are included. In the \L{}ukasiewicz setting this distinction is essential, since $\wedge$ captures the order-theoretic meet, whereas $\odot$ captures the algebraic conjunction associated with the underlying t-norm.
Kripke semantics is generalized to the fuzzy context in a natural way. More precisely, we introduce the notion of ``fuzzy Kripke model'', a generalization of standard Kripke models where instead of relations between elements we have degrees of membership between them (see \cite{hansoul2013extending, jain2020fuzzy, vidal2021transitive, vidal2015modal}).
\begin{definition}[Fuzzy Models]
A \emph{fuzzy frame} is a tuple $\mathcal{F}=(W,R)$ where $W$ is a non empty set of states and $R:W^2\rightarrow[0,1]$ is a membership function indicating the membership degree of each edge.

A \emph{fuzzy Kripke model} is a triple $\mathbf{M}=(W,R,V)$ where $(W,R)$ is a fuzzy frame and $V:Prop\times W \rightarrow [0,1]$ is a the valuation function.  
\end{definition}
\begin{definition}[$\L$ukasiewicz truth evaluation]
\label{def:sem-luk}
Let $\mathbf{M}=(W,R,V)$ be a  fuzzy Kripke model. For every $w\in W$, the truth degree of a formula $\varphi$ at $w$ is defined recursively by a map $\truthsett{\,\cdot\,}{\mathbf{M}}{w}:Form\longrightarrow [0,1]$ as follows:
\begin{itemize}
    \item $\truthsett{p}{\mathbf{M}}{w}=V(p,w)$,

    \item $\truthsett{\bot}{\mathbf{M}}{w}=0$,

    \item $\truthsett{\varphi\,\odot\,\psi}{\mathbf{M}}{w}=\truthsett{\varphi}{\mathbf{M}}{w}\odot\truthsett{\psi}{\mathbf{M}}{w}$, \quad where $a\odot b=\max\{0,a+b-1\}$,

    \item $\truthsett{\varphi\wedge\psi}{\mathbf{M}}{w}=\min\left\{\truthsett{\varphi}{\mathbf{M}}{w},\truthsett{\psi}{\mathbf{M}}{w}\right\}$,

    \item $\truthsett{\varphi\rightarrow\psi}{\mathbf{M}}{w}=\truthsett{\varphi}{\mathbf{M}}{w}\Rightarrow_L\truthsett{\psi}{\mathbf{M}}{w}$, \quad where $a\Rightarrow_L b=\min\{1,1-a+b\}$, 

    \item $\truthsett{\Diamond\varphi}{\mathbf{M}}{w}=\sup_{w'\in W}\left(R(w,w')\odot\truthsett{\varphi}{\mathbf{M}}{w'}\right)$,

    \item$\truthsett{\Box\varphi}{\mathbf{M}}{w}=\inf_{w'\in W}\left(R(w,w')\Rightarrow_L\truthsett{\varphi}{\mathbf{M}}{w'}\right)$,
\end{itemize}
\end{definition}
The truth evaluation defined here is obtained by specializing the general many-valued Kripke semantics used in \cite{vidal2015modal} to the standard $\L$ukasiewicz algebra. In the general setting, modal formulas are interpreted over fuzzy Kripke models by using a chosen residuated lattice operation for conjunction and its corresponding residuum for implication.
Despite their flexibility, fuzzy Kripke semantics have some technical limitations. In particular, the normality axiom $K: \Box(\varphi\to\psi)\to(\Box\varphi\to\Box\psi)$ does not hold automatically over arbitrary many-valued Kripke frames, because the fuzzy accessibility relation interacts in a non-trivial way with many-valued implication (see \cite{vidal2015modal}). This challenge motivates the need for an alternative approach. Below we have provided an example to show that, the normality axiom is not valid in general in \L{}ukasiewicz fuzzy Kripke semantics
\begin{example}
Let $\mathbf{M}=(W,R,V)$ be a \L{}ukasiewicz fuzzy Kripke model, where
$W=\{w,v\}$. Define the fuzzy accessibility relation $R:W\times W\to[0,1]$ by
\[
    R(w,w)=1,\qquad R(w,v)=\frac{1}{2},\qquad
    R(v,w)=0,\qquad R(v,v)=1.
\]
Furthermore, define the valuation of the atomic propositions $p$ and $q$ by
\[
    V(p)(w)=1,\qquad V(q)(w)=1,
\]
and
\[
    V(p)(v)=\frac{1}{2},\qquad V(q)(v)=0.
\]
Recall that in \L{}ukasiewicz logic, $a\Rightarrow_{\L} b=\min(1,1-a+b)$.
The modal operator $\Box$ is interpreted by 
$\truthset{\Box\varphi}{\mathbf{M}}{}(w)=\inf_{u\in W}\bigl(R(w,u)\Rightarrow_{\L}\truthset{\varphi}{\mathbf{M}}{}(u)\bigr)$. We now evaluate the normality axiom $\Box(p\to q)\to(\Box p\to\Box q)$ at the world $w$. First, we compute the truth value of $p\to q$ at each world. At $w$, we have $\truthset{p\to q}{\mathbf{M}}{}(w)=1\Rightarrow_{\L}1=1$. At $v$, we have
$\truthset{p\to q}{\mathbf{M}}{}(v)=\frac{1}{2}\Rightarrow_{\L}0=\frac{1}{2}$. Hence, $\truthset{\Box(p\to q)}{\mathbf{M}}{}(w)=\inf_{u\in W}\bigl(R(w,u)\Rightarrow_{\L}\truthset{p\to q}{\mathbf{M}}{}(u)\bigr)=\min\{1\Rightarrow_{\L}1,\frac{1}{2}\Rightarrow_{\L}\frac{1}{2}\}=\min\{1,1\}=1$. Next, we compute $\Box p$ at $w$, $\truthset{\Box p}{\mathbf{M}}{}(w)=\inf_{u\in W}
    \bigl(R(w,u)\Rightarrow_{\L}\truthset{p}{\mathbf{M}}{}(u)\bigr)=\min\left\{1\Rightarrow_{\L}1,\frac{1}{2}\Rightarrow_{\L}\frac{1}{2}
    \right\}=\min\{1,1\}=1$.
Similarly, for $\Box q$ at $w$, we obtain $\truthset{\Box q}{\mathbf{M}}{}(w)=\frac{1}{2}$ Therefore,
\[
    \truthset{\Box p\to\Box q}{\mathbf{M}}{}(w)
    =
    1\Rightarrow_{\L}\frac{1}{2}
    =
    \frac{1}{2}.
\]
Finally, $\truthset{\Box(p\to q)\to(\Box p\to\Box q)}{\mathbf{M}}{}(w) =\frac{1}{2}$. Hence, the normality axiom is not valid in general in \L{}ukasiewicz fuzzy Kripke semantics.
\end{example}
\noindent\textbf{Fuzzy Topological Space:}
Fuzzy set theory was first introduced by Lotfi Zadeh in 1965 in \cite{zadeh1965fuzzy}, laying the foundation for fuzzy topology. Building on this, C.L. Chang introduced the concept of fuzzy topological spaces in 1968 in \cite{chang1968fuzzy}, defining  a fuzzy topology on a set $X$ as a family $\tau \subseteq I^X$ (with $I = [0,1]$) satisfying certain axioms, where each element of $\tau$ is regarded as a fuzzy open set. He also extended this framework to membership functions defined over arbitrary complete distributive lattices.
\begin{definition}[{\L}ukasiewicz Fuzzy operations]
    Let $X$ be a non-empty set. A fuzzy subset of $X$ is a function $\lambda: X\to [0,1]$. For fuzzy subsets $\lambda,\mu:X\to[0,1]$, we can define the pointwise operations:
\begin{align*}
    (\lambda\oplus\mu)(x)=\min\big(1,\lambda(x)+\mu(x)\big), \qquad&
    (\lambda\odot\mu)(x)=\max\big(0,\lambda(x)+\mu(x)-1\big),\\
    (\lambda\Rightarrow\mu)(x)=\min\big(1,1-\lambda(x)+\mu(x)\big), \qquad&
    (\lambda\land\mu)(x)=\min\big(\lambda(x),\mu(x)\big).
\end{align*}
\end{definition}
We write $0_{X}$ and $1_{X}$ for the constant fuzzy sets with values $0$ and $1$, respectively. The pointwise supremum is written as 
$(\lor_{i}\lambda_{i})(x)=\sup_{i}\lambda_{i}(x)$ and pointwise negation is $(\neg\lambda)(x)=1-\lambda(x)$. Before recalling the definition of a fuzzy topological space, we first fix the meaning of union and intersection for fuzzy sets. A fuzzy set $G$ on a nonempty set $X$ is identified with its membership function $G:X\to[0,1]$, where $G(x)$ represents the degree to which $x$ belongs to $G$. If $G_1$ and $G_2$ are fuzzy sets on $X$, then their intersection and union are defined pointwise by
$(G_1\cap G_2)(x)=\min\{G_1(x),G_2(x)\}$, and
$(G_1\cup G_2)(x)=\max\{G_1(x),G_2(x)\}$, for every $x\in X$. More generally, for an arbitrary family $\{G_j:j\in J\}$ of fuzzy sets, the fuzzy union is defined by
$\left(\bigcup_{j\in J}G_j\right)(x)=\sup_{j\in J}G_j(x)$.

\begin{definition}[Fuzzy Topological Space \cite{chang1968fuzzy}]
A family $\tau$ of fuzzy sets in $X$ is known as {\em fuzzy topology} on a nonempty set $X$ if it  satisfies the following axioms:
\begin{itemize}
    \item[(T1)] $0_{X}, 1_{X} \in \tau$
    \item[(T2)] $G_{1}\bigcap G_{2}\in \tau$, for any ${G_{1}},G_{2}\in \tau$
    \item[(T3)] $\bigcup G_{j}\in \tau$, for any arbitrary family $\{G_{j}:j\in J\}$, with $G_j\in\tau$.
\end{itemize}
The elements of $\tau$ are called a {\em fuzzy open set}.
\end{definition}
\begin{definition}[Fuzzy interior]
Let $(X,\tau)$ be a fuzzy topological space, then for a fuzzy subset
$\lambda:X \to [0,1]$, we define fuzzy interior as
$
\operatorname{Int}(\lambda)
=
\sup \{ \mu:\mu \in \tau, \mu \leq \lambda \}.
$
Thus $\operatorname{Int}(\lambda)$ is the greatest fuzzy open set contained in $\lambda$.
\end{definition}
\begin{definition}[Fuzzy closure]
Let $(X,\tau)$ be a fuzzy topological space, then for a fuzzy subset $\lambda:X \to [0,1]$, we define fuzzy closure as 
$
\operatorname{Cl}(\lambda)
=
\inf \{ \mu : \mu \text{ is closed, } \lambda \leq \mu \}.
$
It could also be defined as $ \operatorname{Cl}(\lambda)=\neg \operatorname{Int}(\neg \lambda)$. Equivalently,
$\operatorname{Cl}(\lambda)(x)=1-\operatorname{Int}(\neg \lambda)(x)$.
Hence interior and closure are connected by the duality
$\operatorname{Cl}(\lambda)=\neg \operatorname{Int}(\neg \lambda)$.
\end{definition}
A lattice is a partially ordered set $L$ in which every pair of elements in a meet and a join. If arbitrary meets and joins exist, then $L$ is
called a complete lattice. Goguen \cite{goguen1973fuzzy} generalized
Chang's fuzzy topology by replacing the unit interval $[0,1]$ with an
arbitrary lattice of truth values $L$. In this framework, fuzzy sets are
functions $A:X\to L$, and topological structures are defined directly in
terms of the algebraic operations of $L$ (see \cite{goguen1967fuzzy}). This approach is particularly
important for many-valued modal logics, since different logics are based on different truth-value algebras. It therefore provides a natural foundation for constructing a fuzzy topology compatible with the standard \L{}ukasiewicz algebra considered in this paper.

\section{{\L}ukasiewicz Modal Logic: topological semantics}\label{main section}
In this section, we introduce fuzzy topological semantics for modal logic over the standard {\L}ukasiewicz algebra. Building on the fuzzy topological approach previously introduced for the G\"odel case by Khan and Martins in \cite{khan2025fuzzytopologicalsemantics}, we now formulate the corresponding semantic framework for the {\L}ukasiewicz setting. The aim is to interpret modal formulas as fuzzy subsets of a Chang  fuzzy topological space \cite{chang1968fuzzy}, while evaluating the propositional connectives according to the operations of the standard {\L}ukasiewicz algebra. We define the associated models, give the recursive interpretation of formulas, and explain how the modal operators are represented through fuzzy topological constructions.

Definition \ref{language} contains several primitive connectives because, in the many-valued setting, the usual classical identification between different logical operations is no longer valid. In particular, the strong conjunction $\odot$, the residuated implication $\to$, and the lattice meet $\land$ represent
different operations in the underlying algebra. For instance, in \L ukasiewicz logic, $\odot$ is interpreted as a strong conjunction, while $\land$ is interpreted as the minimum operation, and these two operations do not coincide in general. Therefore, including these connectives makes the language rich enough to express both the algebraic structure of the fuzzy logic and the modal-topological principles studied later.

\begin{definition}[Fuzzy topological Model]
A fuzzy topological model is a triple $M=(X,\tau,V)$, where $(X,\tau)$ is a fuzzy topological space and $V:Prop\to [0,1]^X$ is a valuation ($V(p)(x)$ is the degree to which
$p$ is true at $x$).
\end{definition}
\begin{definition}
Let $M=(X,\tau,V)$ be a fuzzy topological model, for every formula $\varphi$, the interpretation $ \truthset{\varphi}{M}{} : X \longrightarrow [0,1] $ is defined recursively by
\begin{itemize}
    \item $\truthset{p}{M}{} = V(p)$,
    
    \item $\truthset{\bot}{M}{} = 0_X$,
    
    \item $\truthset{\varphi \odot \psi}{M}{}
    =
    \truthset{\varphi}{M}{} \odot \truthset{\psi}{M}{}$,
    
    \item $\truthset{\varphi \wedge \psi}{M}{}
    =
    \truthset{\varphi}{M}{} \wedge \truthset{\psi}{M}{}$,
    
    \item $\truthset{\varphi \to \psi}{M}{}
    =
    \truthset{\varphi}{M}{} \Rightarrow_{Ł} \truthset{\psi}{M}{}$,
    
    \item $\truthset{\Box \varphi}{M}{}
    =
    \operatorname{Int}\bigl(\truthset{\varphi}{M}{}\bigr)$,

    \item $ \truthset{\Diamond \varphi}{M}{} = \operatorname{Cl}\bigl(\truthset{\varphi}{M}{}\bigr)$.
\end{itemize}
\end{definition}
Since the connectives $\neg$, $\top$ and $\leftrightarrow$ are introduced as abbreviations in the language, their interpretations are obtained from the recursive clauses above. Hence, for every $\varphi,\psi\in Form$, $ \truthset{\neg\varphi}{M}{} = \truthset{\varphi\to\bot}{M}{} = \truthset{\varphi}{M}{} \Rightarrow_{L} 0_X$, $ \truthset{\top}{M}{} = \truthset{\bot\to\bot}{M}{} = 0_X \Rightarrow_{L} 0_X = 1_X$, and 
\[ 
\begin{aligned}
\truthset{\varphi\leftrightarrow\psi}{M}{} &= \truthset{(\varphi\to\psi)\wedge(\psi\to\varphi)}{M}{} \\ &= \bigl(\truthset{\varphi}{M}{} \Rightarrow_{L} \truthset{\psi}{M}{}\bigr) \wedge \bigl(\truthset{\psi}{M}{} \Rightarrow_{L} \truthset{\varphi}{M}{}\bigr). 
\end{aligned} 
\] 
Therefore, although these connectives are not taken as primitive in the recursive definition, their truth values are well defined through the primitive connectives of the language. The truth degree of $\varphi$ at $x\in X$ in $M$ is $\llbracket \varphi \rrbracket_M(x)$. We say that $\varphi$ is true at $x$ if $\llbracket \varphi \rrbracket_M(x)=1$. We say that $\varphi$ is valid in $M$ if $\llbracket \varphi \rrbracket_M=\mathbf{1}_X$,
equivalently, if $\llbracket \varphi \rrbracket_M(x)=1$ for every $x\in X$. A formula is valid in a fuzzy topological space $(X,\tau)$ if it is valid in every model based on $(X,\tau)$. Consequently, using the defining property of the \L{}ukasiewicz residuum (see \cite{hajek2001metamathematics}), we have that $\varphi\to\psi$ is valid in $M$ if and only if
$\truthset{\varphi}{M}{}\leq\truthset{\psi}{M}{}$. Indeed, the \L{}ukasiewicz implication satisfies $a\Rightarrow b=1$ if and only if $a\leq b$. This observation allows us to translate the validity of implication formulas into pointwise inequalities between fuzzy truth sets. Before studying the validity of the axioms of $S4$ in fuzzy topological models, we first develop the connection between \L{}ukasiewicz fuzzy Kripke frames and fuzzy topological models. This relationship provides the semantic foundation needed to transfer modal properties from the Kripke setting to the fuzzy topological setting.
\begin{definition}[Fuzzy Alexandroff Space \cite{bosifuzzy}]
Let $W$ be a set and $\tau \subseteq [0,1]^W$ a fuzzy topology on $W$. We say that $(W, \tau)$ is a \emph{fuzzy Alexandroff space} if,
$\tau$ is closed under arbitrary intersections (infima).
\end{definition}
To discuss the connection between fuzzy Kripke frames and fuzzy topological models, we focus on fuzzy Alexandroff topologies. Given a reflexive and Łukasiewicz-transitive fuzzy Kripke relation, we associate with it a family of (R)-upward fuzzy open sets and show that this family forms a fuzzy Alexandroff topology. We then establish that the interior and closure operators of the induced topology admit representations in terms of the corresponding fuzzy Kripke relation. As a consequence, the fuzzy Kripke model and its induced fuzzy topological model assign the same truth degree to every modal formula. Thus, fuzzy Alexandroff spaces provide a natural topological representation of reflexive and Łukasiewicz-transitive fuzzy Kripke semantics.
\begin{definition}[Reflexive and $\L$ukasiewicz Transitive Frame]
Let $\mathcal{F}=(W,R)$ be a fuzzy frame. We say that $\mathcal{F}$ is reflexive if 
\[ R(w,w)=1 \qquad \text{for every } w\in W. \] 
We say that $\mathcal{F}$ is $\L$ukasiewicz transitive if 
\[ R(w,v)\odot R(v,u)\leq R(w,u) \qquad \text{for all } w,v,u\in W. \] If both conditions hold, then $\mathcal{F}$ is called a reflexive and $\L$ukasiewicz transitive fuzzy Kripke frame.
\end{definition}
The second condition is the natural $\L$ukasiewicz analogue of fuzzy
transitivity. In the Gödel case, transitivity is expressed using the
minimum operation. In the $\L$ukasiewicz case, it is expressed using the strong conjunction $\odot$.
\begin{definition}[Induced Fuzzy Topology]
Given a reflexive and $\L$ukasiewicz transitive fuzzy Kripke frame $\mathcal{F}=(W,R)$, we define the family
\[
\tau_R
=
\left\{
\lambda\in[0,1]^W:
R(w,v)\odot \lambda(w)\leq \lambda(v)
\text{ for all }w,v\in W
\right\}.
\]
The elements of $\tau_R$ are called $R$-upward $\L$ukasiewicz fuzzy open sets.
\end{definition}
The condition $R(w,v)\odot \lambda(w)\leq \lambda(v)$ means that whenever $v$ is accessible from $w$ to degree $R(w,v)$, the
membership degree of $\lambda$ at $w$ is transferred to $v$ according to the $\L$ukasiewicz conjunction. By residuation, this condition is equivalent to $\lambda(w)\leq R(w,v)\Rightarrow_{\L}\lambda(v)$.

\begin{proposition}\label{prop:luk-induced-alexandroff}
For any reflexive and \L ukasiewicz transitive fuzzy Kripke frame $\mathcal{F}=(W,R)$ the induced family $\tau_R$ contains $0_W$ and $1_W$, and it is closed under arbitrary joins and arbitrary meets. Hence $(W,\tau_R)$ is a fuzzy Alexandroff topology.
\end{proposition}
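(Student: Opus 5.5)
The plan is to verify each closure property directly from the defining inequality of $\tau_R$. The key tool is the residuated form
\[
\lambda\in\tau_R \iff \lambda(w)\leq R(w,v)\Rightarrow_{\L}\lambda(v)\ \text{ for all } w,v\in W,
\]
which follows from the residuation condition recalled in the preliminaries. Monotonicity of $\odot$ in each argument and the identity $a\odot 0=0$ will also be used. Throughout, joins and meets are taken pointwise as in the definition of fuzzy union and intersection.

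\textbf{Constants.} For $0_W$ we have $R(w,v)\odot 0=\max(0,R(w,v)-1)=0\leq 0$. For $1_W$ we have $R(w,v)\odot 1=R(w,v)\leq 1$. Hence both belong to $\tau_R$, which gives (T1).

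\textbf{Arbitrary joins.} Let $\{\lambda_j\}_{j\in J}\subseteq\tau_R$ and put $\lambda=\bigvee_j\lambda_j$. The key fact is that $a\odot(\cdot)$ preserves arbitrary suprema on $[0,1]$. This holds because $b\mapsto\max(0,a+b-1)$ is continuous and nondecreasing; equivalently, $a\odot(\cdot)$ is a left adjoint of $a\Rightarrow_{\L}(\cdot)$. Therefore
\[
R(w,v)\odot\sup_j\lambda_j(w)=\sup_j\bigl(R(w,v)\odot\lambda_j(w)\bigr)\leq\sup_j\lambda_j(v)=\lambda(v).
\]
The empty family gives $0_W$, which is already covered. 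This settles (T3).

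\textbf{Arbitrary meets.} Let $\mu=\bigwedge_j\lambda_j$. Monotonicity alone suffices here: for each $j$,
\[
R(w,v)\odot\mu(w)\leq R(w,v)\odot\lambda_j(w)\leq\lambda_j(v).
\]
Taking the infimum over $j$ gives $R(w,v)\odot\mu(w)\leq\mu(v)$. The empty meet is $1_W$, which is already covered. Binary meets are a special case, so (T2) follows, and together with closure under arbitrary meets this makes $(W,\tau_R)$ a fuzzy Alexandroff space in the sense of the definition above.

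The only step that is not purely formal is the interchange of $\odot$ with an arbitrary supremum in the join case; I will justify it by residuation (left adjoints preserve joins) or by continuity of the {\L}ukasiewicz t-norm. I also note that reflexivity and {\L}ukasiewicz transitivity are not actually needed for this proposition. They presumably matter later, for example for showing that $v\mapsto R(w,v)$ lies in $\tau_R$, or for the interior representation $\operatorname{Int}(\lambda)(w)=\inf_v\bigl(R(w,v)\Rightarrow_{\L}\lambda(v)\bigr)$. I will remark on this rather than use them.
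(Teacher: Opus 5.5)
Your proposal is correct and follows essentially the same route as the paper: you check $0_W$ and $1_W$ directly, use the fact that $R(w,v)\odot(\cdot)$ preserves arbitrary suprema for joins, and use monotonicity of $\odot$ for meets. Your additions are accurate refinements that the paper leaves implicit: justifying the sup-interchange by continuity or residuation, handling empty families, and observing that reflexivity and \L ukasiewicz transitivity are never used.
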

\begin{proof}
We verify the required conditions, Let $1_W:W\to[0,1]$ be the constant fuzzy set with value $1$. For all
$w,v\in W$,
\[
R(w,v)\odot 1_W(w)
=
R(w,v)\odot 1
=
R(w,v)
\leq 1
=
1_W(v).
\]
Thus $1_W\in\tau_R$. Similarly, let $0_W:W\to[0,1]$ be the constant fuzzy set with value $0$.
Then $R(w,v)\odot 0_W(w)=R(w,v)\odot 0=0\leq 0=0_W(v)$. Thus $0_W\in\tau_R$.\\
Let $\{\lambda_i\}_{i\in I}\subseteq\tau_R$ and define $ \lambda=\bigvee_{i\in I}\lambda_i$, that is, $\lambda(w)=\sup_{i\in I}\lambda_i(w)$. For all $w,v\in W$, since $\odot$ preserves joins in each coordinate, we have
\[
R(w,v)\odot\lambda(w)
=
R(w,v)\odot\bigvee_{i\in I}\lambda_i(w)
=
\bigvee_{i\in I}\bigl(R(w,v)\odot\lambda_i(w)\bigr).
\]
Since each $\lambda_i\in\tau_R$, we have $R(w,v)\odot\lambda_i(w)\leq\lambda_i(v)$ for every $i\in I$. Therefore,
$\bigvee_{i\in I}\bigl(R(w,v)\odot\lambda_i(w)\bigr)\leq\bigvee_{i\in I}\lambda_i(v)
=\lambda(v)$. Hence $R(w,v)\odot\lambda(w)\leq\lambda(v)$, so $\lambda\in\tau_R$.\\
Let $\{\lambda_i\}_{i\in I}\subseteq\tau_R$ and define $ \mu=\bigwedge_{i\in I}\lambda_i$, that is, $\mu(w)=\inf_{i\in I}\lambda_i(w)$. Since $\mu(w)\leq\lambda_i(w)$ for every $i\in I$, and since $\odot$ is monotone, we have
$R(w,v)\odot\mu(w)\leq R(w,v)\odot\lambda_i(w)\leq\lambda_i(v)$ for every $i\in I$. Hence
\[
R(w,v)\odot\mu(w)
\leq
\bigwedge_{i\in I}\lambda_i(v)
=
\mu(v).
\]
Therefore $\mu\in\tau_R$. Thus $\tau_R$ contains $0_W$ and $1_W$, and it is closed under arbitrary joins and arbitrary meets. Hence $(W,\tau_R)$ is a fuzzy Alexandroff topology.
\end{proof}

For a fuzzy set $A:W\to[0,1]$, $\sup\{\mu(w):\mu\in\tau_R,\ \mu\leq A\}$ is the value at $w$ of the fuzzy interior of $A$, since it is obtained by taking the supremum of all fuzzy open sets below $A$. Similarly, $\inf\{\eta(w):\eta \text{ is closed},\ A\leq\eta\}$ is the value at $w$ of the fuzzy closure of $A$, since it is obtained by taking
the infimum of all fuzzy closed sets above $A$. The following proposition shows that, in the \L ukasiewicz fuzzy Alexandroff topology induced by $R$, these topological operations have the corresponding relational forms.

\begin{proposition}\label{prop:luk-int-cl-representation}
Let $(W,\tau_R)$ be the fuzzy Alexandroff topology induced by a
reflexive and $\L$ukasiewicz transitive fuzzy relation $ R:W\times W\to[0,1]$. Then, for every fuzzy set $A:W\to[0,1]$ and every $w\in W$, the following hold:
\begin{enumerate}
    \item 
$
    \sup\{\mu(w):\mu\in\tau_R,\ \mu\leq A\}
    =
    \inf_{v\in W}
    \bigl(R(w,v)\Rightarrow_{\L}A(v)\bigr).
$
\item
$
    \inf\{\eta(w):\eta \text{ is closed},\ A\leq\eta\}
    =
    \sup_{v\in W}
    \bigl(R(w,v)\odot A(v)\bigr).
$
\end{enumerate}
\end{proposition}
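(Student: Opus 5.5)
For part (1), set $\beta(w):=\inf_{v\in W}\bigl(R(w,v)\Rightarrow_{\L}A(v)\bigr)$ and show two inequalities. First, $\beta\leq A$: take $v=w$ and use reflexivity, so $\beta(w)\leq 1\Rightarrow_{\L}A(w)=A(w)$. Second, $\beta\in\tau_R$. By the residuation remark after the definition of $\tau_R$, it suffices to show $R(w,v)\odot\beta(w)\leq\beta(v)$, i.e.\ $R(w,v)\odot\beta(w)\leq R(v,u)\Rightarrow_{\L}A(u)$ for every $u$. By residuation this is equivalent to
\[
R(v,u)\odot R(w,v)\odot\beta(w)\leq A(u).
\]
Łukasiewicz transitivity gives $R(w,v)\odot R(v,u)\leq R(w,u)$. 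Monotonicity of $\odot$ then bounds the left side by $R(w,u)\odot\beta(w)$. Since $\beta(w)\leq R(w,u)\Rightarrow_{\L}A(u)$, residuation yields $R(w,u)\odot\beta(w)\leq A(u)$. Associativity and commutativity of $\odot$ justify the rearrangements. So $\beta$ is an open set below $A$, which gives $\beta(w)\leq\sup\{\mu(w):\mu\in\tau_R,\mu\leq A\}$. Conversely, for any $\mu\in\tau_R$ with $\mu\leq A$ and any $v$, we have $R(w,v)\odot\mu(w)\leq\mu(v)\leq A(v)$. Residuation gives $\mu(w)\leq R(w,v)\Rightarrow_{\L}A(v)$, hence $\mu(w)\leq\beta(w)$. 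Both inequalities together give (1), and also show that $\operatorname{Int}(A)=\beta$ is itself open, consistent with Proposition~\ref{prop:luk-induced-alexandroff}.

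For part (2), I will avoid a direct argument with closed sets and use the duality. The closed sets are exactly the $\eta$ with $\neg\eta\in\tau_R$, and $A\leq\eta$ iff $\neg\eta\leq\neg A$. Because $\neg$ is an involutive order-reversing bijection, it turns infima into suprema:
\[
\inf\{\eta(w):\eta\text{ closed},\,A\leq\eta\}=1-\sup\{\mu(w):\mu\in\tau_R,\,\mu\leq\neg A\}.
\]
Applying part (1) to $\neg A$, the right side becomes $1-\inf_v\bigl(R(w,v)\Rightarrow_{\L}(1-A(v))\bigr)=\sup_v\bigl(1-(R(w,v)\Rightarrow_{\L}\neg A(v))\bigr)$. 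The final step is the pointwise identity $\neg(a\Rightarrow_{\L}\neg b)=a\odot b$, which follows by direct computation: $1-\min(1,2-a-b)=\max(0,a+b-1)$.

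The only delicate point is the openness of $\beta$ in part (1), namely keeping the order of the residuation steps straight when chaining transitivity. Everything else is formal manipulation of the residuation law and involutive negation.
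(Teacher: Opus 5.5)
Your proof is correct, and part (1) is essentially the paper's argument: reflexivity gives $\beta\le A$, residuation with {\L}ukasiewicz transitivity gives openness, and residuation again gives maximality. Part (2) is where you genuinely diverge.

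The paper argues directly. It sets $C(w)=\sup_{v}\bigl(R(w,v)\odot A(v)\bigr)$ and gets $A\le C$ from reflexivity. It shows $C$ is closed by verifying $R(w,v)\odot C(v)\le C(w)$, using that $\odot$ preserves suprema together with transitivity. It then proves minimality against an arbitrary closed $\eta\ge A$ by monotonicity of $\odot$. You instead transport part (1) through the involution $\neg$ and finish with the pointwise identity $\neg(a\Rightarrow_{\L}\neg b)=a\odot b$.

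Your route is shorter, avoids repeating the transitivity argument, and works with the standard notion of closedness, $\neg\eta\in\tau_R$. The paper, by contrast, silently takes ``closed'' to mean $R(w,v)\odot\eta(v)\le\eta(w)$ for all $w,v$. That criterion does coincide with $\neg\eta\in\tau_R$, but only after the small computation $R(w,v)\odot(1-\eta(w))\le 1-\eta(v)\iff R(w,v)+\eta(v)-1\le\eta(w)$, which the paper never records.

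The paper's direct route, on the other hand, is independent of part (1). Once closedness is given by that downward condition, it uses only residuated-lattice facts about $\odot$. It would therefore carry over to settings where negation is not involutive. Your reduction depends essentially on $\neg\neg a=a$ and on the specifically {\L}ukasiewicz identity linking $\Rightarrow_{\L}$, $\neg$ and $\odot$.
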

\begin{proof}
We prove both parts separately.
\begin{enumerate}
\item
Define a fuzzy set $B:W\to[0,1]$ by $ B(w)=\inf_{v\in W}
\bigl(R(w,v)\Rightarrow_{\L}A(v)\bigr)$. We show that $B$ is the largest open fuzzy set below $A$. First, by reflexivity, $R(w,w)=1$. Therefore,
\[
B(w)
\leq
R(w,w)\Rightarrow_{\L}A(w)
=
1\Rightarrow_{\L}A(w)
=
A(w).
\]
Hence $B\leq A$. Now to show that $B\in\tau_R$, we must prove that $R(w,v)\odot B(w)\leq B(v)$ for all $w,v\in W$. By the definition of $B(v)$, it is enough to prove that, for every
$u\in W$,
\[
R(w,v)\odot B(w)
\leq
R(v,u)\Rightarrow_{\L}A(u).
\]
By residuation, this is equivalent to proving $ R(w,v)\odot B(w)\odot R(v,u)\leq A(u)$. Using associativity and commutativity of $\odot$, together with $\L$ukasiewicz transitivity, we get
\[
R(w,v)\odot R(v,u)\odot B(w)
\leq
R(w,u)\odot B(w).
\]
By the definition of $B(w)$, $B(w)\leq R(w,u)\Rightarrow_{\L}A(u)$. Applying residuation again, we obtain $R(w,u)\odot B(w)\leq A(u)$. Therefore, $R(w,v)\odot B(w)\odot R(v,u)\leq A(u)$. Hence $R(w,v)\odot B(w)\leq R(v,u)\Rightarrow_{\L}A(u)$ for every $u\in W$. Taking the infimum over $u$, we obtain $R(w,v)\odot B(w)\leq B(v)$.\\
Thus $B\in\tau_R$. Now let $\mu\in\tau_R$ with $\mu\leq A$. Since $\mu\in\tau_R$, we have $R(w,v)\odot\mu(w)\leq\mu(v)$ for every $v\in W$. Since $\mu\leq A$, $ R(w,v)\odot\mu(w)\leq A(v)$. By residuation, $
\mu(w)\leq R(w,v)\Rightarrow_{\L}A(v)$ for every $v\in W$. Therefore,
\[
\mu(w)
\leq
\inf_{v\in W}
\bigl(R(w,v)\Rightarrow_{\L}A(v)\bigr)
=
B(w).
\]
This shows that every open fuzzy set below $A$ is below $B$, and we have
already shown that $B$ itself is open and below $A$. Hence
\[
\sup\{\mu(w):\mu\in\tau_R,\ \mu\leq A\}
=
B(w)
=
\inf_{v\in W}
\bigl(R(w,v)\Rightarrow_{\L}A(v)\bigr).
\]

\item
Define a fuzzy set $C:W\to[0,1]$ by $C(w)=\sup_{v\in W} \bigl(R(w,v)\odot A(v)\bigr)$. We show that $C$ is the smallest closed fuzzy set above $A$. First, by reflexivity, $R(w,w)=1$. Hence
\[
C(w)
\geq
R(w,w)\odot A(w)
=
1\odot A(w)
=
A(w).
\]
Thus $A\leq C$. Now we show that $C$ is closed. We need to prove that $
R(w,v)\odot C(v)\leq C(w)$ for all $w,v\in W$. By definition,
$C(v)=\sup_{u\in W}\bigl(R(v,u)\odot A(u)\bigr)$. Since $\odot$ preserves joins in each coordinate,
\[
R(w,v)\odot C(v)
=
\sup_{u\in W}
\bigl(R(w,v)\odot R(v,u)\odot A(u)\bigr).
\]
By $\L$ukasiewicz transitivity, $R(w,v)\odot R(v,u)\leq R(w,u)$. Therefore, $R(w,v)\odot R(v,u)\odot A(u)\leq R(w,u)\odot A(u)$. Taking the supremum over all $u\in W$, we obtain
\[
R(w,v)\odot C(v)
\leq
\sup_{u\in W}
\bigl(R(w,u)\odot A(u)\bigr)
=
C(w).
\]
Thus $C$ is closed and contains $A$. Now let $\eta$ be any closed fuzzy set such that $A\leq\eta$. Since $\eta$
is closed, $R(w,v)\odot\eta(v)\leq\eta(w)$. Also, since $A(v)\leq\eta(v)$, monotonicity of $\odot$ gives
$R(w,v)\odot A(v)\leq R(w,v)\odot\eta(v)\leq\eta(w)$. Taking the supremum over all $v\in W$, we obtain
\[
C(w)
=
\sup_{v\in W}
\bigl(R(w,v)\odot A(v)\bigr)
\leq
\eta(w).
\]
Thus $C$ is below every closed fuzzy set containing $A$. Since $C$ itself is closed and contains $A$, it is the smallest closed fuzzy set above $A$. Therefore,
\[
\inf\{\eta(w):\eta \text{ is closed},\ A\leq\eta\}
=
C(w)
=
\sup_{v\in W}
\bigl(R(w,v)\odot A(v)\bigr).
\]
\end{enumerate}
\end{proof}

The previous proposition provides the main link between the relational and topological viewpoints. Indeed, the fuzzy interior and fuzzy closure induced by the fuzzy relation $R$ have exactly the same pointwise form as the modal operators $\Box$ and $\Diamond$ in the corresponding fuzzy Kripke semantics. In what follows, for a given fuzzy Kripke model $\mathbf{M}^{K}=(W,R,V)$, we consider the associated fuzzy topological model $M^{T}=(W,\tau_R,V)$, where $\tau_R$ is the fuzzy topology determined by $R$. We then show that every modal formula receives the same truth-value in both models.

\begin{proposition}\label{prop:luk-kripke-topological-equivalence}
Let $\mathcal{F}=(W,R)$ be a reflexive and \L ukasiewicz transitive fuzzy Kripke frame, and let $(W,\tau_R)$ be the corresponding
fuzzy Alexandroff space. For any fuzzy valuation $V$ and any
modal formula $\varphi$, we have
\[
\truthsett{\varphi}{\mathbf{M}^K}{w}
=
\truthset{\varphi}{M^T}{w},
\]
where
\[
\mathbf{M}^K=(W,R,V)
\quad\text{and}\quad
M^T=(W,\tau_R,V).
\]
\end{proposition}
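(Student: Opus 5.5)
We argue by structural induction on $\varphi$, proving the pointwise statement $\truthsett{\varphi}{\mathbf{M}^K}{w}=\truthset{\varphi}{M^T}{}(w)$ for all $w\in W$ simultaneously. Working with the whole function $W\to[0,1]$ at once is essential, because the modal clauses at $w$ refer to the values of the subformula at every other world.

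\textbf{Base cases.} For an atom $p$, both sides equal $V(p,w)=V(p)(w)$, since the two models share the same valuation. For $\bot$, both sides are $0$.

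\textbf{Propositional cases.} For $\odot$, $\wedge$ and $\to$, the topological clauses are the pointwise liftings of the same operations on $[0,1]$ used in Definition~\ref{def:sem-luk}. Applying the induction hypothesis to $\varphi$ and $\psi$ at $w$ then gives equality directly. The derived connectives $\neg$, $\top$ and $\leftrightarrow$ need no separate treatment, since they are abbreviations.

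\textbf{Modal cases.} This is where the real content lies. Let $A=\truthset{\varphi}{M^T}{}$. By the induction hypothesis, $A(v)=\truthsett{\varphi}{\mathbf{M}^K}{v}$ for every $v\in W$.
\begin{itemize}
\item[($\Box$)] We have $\truthset{\Box\varphi}{M^T}{}(w)=\operatorname{Int}(A)(w)=\sup\{\mu(w):\mu\in\tau_R,\ \mu\le A\}$. Note that the interior is computed pointwise: the sup of the family is a function whose value at $w$ is the sup of the values. By Proposition~\ref{prop:luk-int-cl-representation}(1), this equals $\inf_{v}\bigl(R(w,v)\Rightarrow_{\L}A(v)\bigr)$. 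Substituting the induction hypothesis inside the infimum yields exactly the Kripke clause for $\Box\varphi$.
\item[($\Diamond$)] We use $\operatorname{Cl}(A)=\inf\{\eta:\eta \text{ closed},\ A\le\eta\}$, evaluated pointwise, together with Proposition~\ref{prop:luk-int-cl-representation}(2). This gives $\sup_v\bigl(R(w,v)\odot A(v)\bigr)$, which is the Kripke clause for $\Diamond\varphi$.
\end{itemize}

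\textbf{Main obstacle.} The delicate point is the meaning of ``closed'' in the $\Diamond$ case. Proposition~\ref{prop:luk-int-cl-representation}(2) uses the condition $R(w,v)\odot\eta(v)\le\eta(w)$. The paper's official definition of closure, however, is $\operatorname{Cl}(\lambda)=\neg\operatorname{Int}(\neg\lambda)$, where closed sets are the complements $1-\mu$ of opens $\mu\in\tau_R$.

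For $\eta=1-\mu$, the openness condition on $\mu$ reads $R(w,v)\odot(1-\eta(w))\le 1-\eta(v)$. Unfolding $\odot$, this becomes $R(w,v)-\eta(w)\le 1-\eta(v)$ whenever the left-hand side is positive. That is equivalent to $R(w,v)+\eta(v)-1\le\eta(w)$, i.e.\ $R(w,v)\odot\eta(v)\le\eta(w)$. So the two notions of closed set agree, and the $\Diamond$ case goes through.

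As a sanity check, the same result can be obtained from duality. We have $\operatorname{Cl}(A)(w)=1-\inf_v\bigl(R(w,v)\Rightarrow_{\L}(1-A(v))\bigr)=\sup_v\bigl(1-\min(1,2-R(w,v)-A(v))\bigr)=\sup_v\bigl(R(w,v)\odot A(v)\bigr)$. This uses $1-(a\Rightarrow_{\L}\neg b)=a\odot b$, an identity specific to Łukasiewicz logic.

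Once this point is settled, the induction closes.
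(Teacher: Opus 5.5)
Your proof is correct and follows the paper's argument: a structural induction in which the propositional cases are immediate and the $\Box$ and $\Diamond$ cases reduce to Proposition~\ref{prop:luk-int-cl-representation}. Your extra check that the sets satisfying $R(w,v)\odot\eta(v)\le\eta(w)$ are exactly the complements $1-\mu$ of opens $\mu\in\tau_R$ (or, equivalently, your duality computation) fills in the $\Diamond$ case, which the paper only calls ``similar''.
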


\begin{proof}
The proof is by induction on the structure of $\varphi$. For propositional variables, the result is immediate because both models
use the same valuation $V$. The cases for the propositional connectives are also immediate, since both semantics use the same $\L$ukasiewicz algebraic operations pointwise. Now suppose that the result holds for a formula $\psi$. We prove it for $\square\psi$. By the definition of the $\L$ukasiewicz fuzzy Kripke model,
\[
\truthsett{\square\psi}{\mathbf{M}^K}{w}
=
\inf_{v\in W}
\bigl(
R(w,v)\Rightarrow_{\L}
\truthsett{\psi}{\mathbf{M}^K}{v}
\bigr).
\]
By the induction hypothesis, $ \truthsett{\psi}{\mathbf{M}^K}{v} =
\truthset{\psi}{M^T}{v} $ for every $v\in W$. Therefore,
\[
\truthsett{\square\psi}{\mathbf{M}^K}{w}
=
\inf_{v\in W}
\bigl(
R(w,v)\Rightarrow_{\L}
\truthset{\psi}{M^T}{v}
\bigr).
\]
By Proposition~\ref{prop:luk-int-cl-representation}, this is equal to
$ Int_{\L}\bigl(\truthset{\psi}{M^T}{}\bigr)(w)$. By the definition of the topological semantics,
\[
Int_{\L}
\bigl(\truthset{\psi}{M^T}{}\bigr)(w)
=
\truthset{\square\psi}{M^T}{w}.
\]
Hence
\[
\truthsett{\square\psi}{\mathbf{M}^K}{w}
=
\truthset{\square\psi}{M^T}{w}.
\]
The proof for $\diamondsuit\psi$ is similar. Thus, by induction, the equality holds for every modal formula $\varphi$.
\end{proof}
\textbf{Soundness of $S4$}: Before proving the validity of $S4$ in the fuzzy topological setting, we must address a structural issue unique to the {\L}ukasiewicz case, which does not arise in the G\"odel setting described in \cite{khan2025fuzzytopologicalsemantics}. In Chang's original fuzzy topology (\cite{chang1968fuzzy}), open sets are closed under arbitrary joins and finite lattice meets. For the standard G\"odel algebra, the monoidal conjunction coincides with the lattice meet, making topological closure automatic. However, in the standard {\L}ukasiewicz algebra, the strong conjunction $a \odot b=\max(0,a+b-1)$ differs from the lattice meet $a\wedge b=\min(a,b)$. Consequently, closure under finite meets does not guarantee closure under the strong conjunction. This distinction is critical when proving the normal modal axiom $\Box(\varphi\to\psi)\to(\Box\varphi\to\Box\psi)$. Interpreting $\Box$ as a fuzzy interior operator requires demonstrating that for arbitrary fuzzy sets $A,B\in [0,1]^X$,
\[
\operatorname{Int}(A\to B) \leq \operatorname{Int}(A)\to \operatorname{Int}(B)
\]
By residuation in the {\L}ukasiewicz algebra, $\operatorname{Int}(A\to B)\odot\operatorname{Int}(A)\leq(A\to B)\odot A\leq B$. To conclude that $\operatorname{Int}(A\to B)\odot\operatorname{Int}(A)\leq\operatorname{Int}(B)$, the fuzzy set $\operatorname{Int}(A\to B)\odot \operatorname{Int}(A)$ must itself be open. Chang's topology, lacking closure under the monoidal operation $\odot$, cannot guarantee this, preventing the general proof of the $K$ axiom. Strengthening Chang's topology to include closure under $\odot$ is therefore structurally necessary to ensure compatibility with the monoidal conjunction of {\L}ukasiewicz logic. A parallel phenomenon exists in many-valued Kripke semantics, validating the $K$ axiom over residuated lattices often requires semantic restrictions, such as idempotent or crisp accessibility relations (see \cite{vidal2015modal, hansoul2013extending}). Rather than restricting the accessibility relation, our approach strengthens the fuzzy topology to ensure the family of open fuzzy sets is closed under strong conjunction. The additional axiom,
\[
\lambda,\mu\in\tau_L \quad\Longrightarrow\quad \lambda\odot\mu\in\tau_L
\]
ensures the interior operator behaves normally with respect to the algebra's residuated structure. This validates the $K$ axiom, while $T$ and $4$ follow from standard interior properties, establishing a rigorous semantic foundation for $S4$ over {\L}ukasiewicz fuzzy topological models. To define {\L}ukasiewicz fuzzy topological spaces, we synthesize two established frameworks. We adopt closure under arbitrary joins and a monoidal operation from Goguen’s $L$-topology in \cite{goguen1973fuzzy}, and retain closure under finite lattice meets from classical (see \cite{kelley1955general}) and Chang's fuzzy topology (see \cite{chang1968fuzzy}). By taking $L=[0,1]$ equipped with {\L}ukasiewicz operations and setting $\odot$ as the monoidal operation, we define the {\L}ukasiewicz fuzzy topology as follows,
\begin{definition}
    A {\L}ukasiewicz fuzzy topology on $X$ is a family $\tau_{L}\subseteq[0,1]^X$ satisfying:
  
  $
    \begin{array}{l}
        0_X,1_X \in \tau_L,\\
        \lambda,\mu\in\tau_L \Rightarrow \lambda\land\mu\in\tau_L,\\
        \lambda,\mu\in\tau_L \Rightarrow \lambda \odot\mu\in\tau_L,\\
        \{\lambda_i\}_{i\in I}\subseteq\tau_L \Rightarrow \lor_{i\in I} \lambda_i\in\tau_L.
    \end{array}
    $
\end{definition}
The pair $(X,\tau_L)$ is is called a {\L}ukasiewicz fuzzy topological space, and the members of $\tau_L$ are called {\L}ukasiewicz fuzzy open sets. This definition specializes the general lattice-valued topological idea of open fuzzy sets being closed under arbitrary joins and a monoidal operation, while also retaining finite lattice meets needed for the $S4$ meet principle.

\begin{proposition}[Interior and closure properties]\label{prop:int-closure}
Let $(X,\tau_L)$ be a {\L}ukasiewicz fuzzy topological space. For all fuzzy subsets $\lambda,\mu : X \to [0,1]$, the following properties hold:
\begin{center}
\begin{tabular}{p{0.47\textwidth} p{0.47\textwidth}}
\begin{minipage}[t]{0.47\textwidth}
\begin{enumerate}
    \item $\operatorname{Int}_{L}(1_X) = 1_X$,

    \item $\operatorname{Int}_{L}(\lambda) \leq \lambda$,

    \item $\operatorname{Int}_{L}\bigl(\operatorname{Int}_{L}(\lambda)\bigr)
    =
    \operatorname{Int}_{L}(\lambda)$,

    \item $\operatorname{Int}_{L}(\lambda \wedge \mu)
    =
    \operatorname{Int}_{L}(\lambda) \wedge \operatorname{Int}_{L}(\mu)$,

    \item $\operatorname{Int}_{L}(\lambda) \odot \operatorname{Int}_{L}(\mu)
    \leq
    \operatorname{Int}_{L}(\lambda \odot \mu)$.
\end{enumerate}
\end{minipage}
&
\begin{minipage}[t]{0.47\textwidth}
\begin{enumerate}
    \item $\operatorname{Cl}_{L}(0_X) = 0_X$,

    \item $\lambda \leq \operatorname{Cl}_{L}(\lambda)$,

    \item $\operatorname{Cl}_{L}\bigl(\operatorname{Cl}_{L}(\lambda)\bigr)
    =
    \operatorname{Cl}_{L}(\lambda)$,

    \item $\operatorname{Cl}_{L}(\lambda\vee\mu)=\operatorname{Cl}_{L}(\lambda)\vee\operatorname{Cl}_{L}(\mu)$.
\end{enumerate}
\end{minipage}

\end{tabular}
\end{center}
\end{proposition}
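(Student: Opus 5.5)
The plan is to derive every interior property from one basic fact: $\operatorname{Int}_L(\lambda)$ is itself open and is the greatest open fuzzy set below $\lambda$. The closure properties then follow by the duality $\operatorname{Cl}_L(\lambda)=\neg\operatorname{Int}_L(\neg\lambda)$. Since $\tau_L$ is closed under arbitrary joins, $\operatorname{Int}_L(\lambda)=\bigvee\{\mu\in\tau_L:\mu\leq\lambda\}$ belongs to $\tau_L$. It lies below $\lambda$, because a pointwise supremum of functions each below $\lambda$ is below $\lambda$; this gives item 2. Every open $\mu\leq\lambda$ satisfies $\mu\leq\operatorname{Int}_L(\lambda)$ by construction. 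From this I also get monotonicity: if $\lambda\leq\lambda'$ then $\operatorname{Int}_L(\lambda)\leq\operatorname{Int}_L(\lambda')$. I will also note that $\lambda$ is open if and only if $\operatorname{Int}_L(\lambda)=\lambda$.

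Each interior item then follows in turn.
\begin{itemize}
\item Item 1: $1_X\in\tau_L$, so $\operatorname{Int}_L(1_X)=1_X$.
\item Item 3: $\operatorname{Int}_L(\lambda)$ is open, so it is its own interior.
\item Item 4: monotonicity gives $\leq$ from $\operatorname{Int}_L(\lambda\wedge\mu)\leq\operatorname{Int}_L(\lambda)\wedge\operatorname{Int}_L(\mu)$. For $\geq$, the right-hand side is open by closure of $\tau_L$ under finite meets, and it is below $\lambda\wedge\mu$; maximality of the interior then applies.
\item Item 5: this is the one place where the new $\odot$-axiom is used. $\operatorname{Int}_L(\lambda)\odot\operatorname{Int}_L(\mu)$ is open by closure under $\odot$. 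Pointwise monotonicity of $\odot$ in each argument gives $\operatorname{Int}_L(\lambda)\odot\operatorname{Int}_L(\mu)\leq\lambda\odot\mu$, so it lies below $\operatorname{Int}_L(\lambda\odot\mu)$.
\end{itemize}

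For the closure column, I first record that $\neg$ is an order-reversing involution on $[0,1]^X$. It turns suprema into infima and satisfies De Morgan: $\neg(\lambda\vee\mu)=\neg\lambda\wedge\neg\mu$ pointwise, because $1-\max(a,b)=\min(1-a,1-b)$. The closure items then follow from the interior items.
\begin{itemize}
\item Item 1: $\operatorname{Cl}_L(0_X)=\neg\operatorname{Int}_L(1_X)=\neg 1_X=0_X$.
\item Item 2: apply interior item 2 to $\neg\lambda$ to get $\operatorname{Int}_L(\neg\lambda)\leq\neg\lambda$, then negate.
\item Item 3: $\operatorname{Cl}_L(\operatorname{Cl}_L(\lambda))=\neg\operatorname{Int}_L(\neg\neg\operatorname{Int}_L(\neg\lambda))=\neg\operatorname{Int}_L(\operatorname{Int}_L(\neg\lambda))=\neg\operatorname{Int}_L(\neg\lambda)$, using involutivity and interior item 3.
\item Item 4: $\operatorname{Cl}_L(\lambda\vee\mu)=\neg\operatorname{Int}_L(\neg\lambda\wedge\neg\mu)=\neg\bigl(\operatorname{Int}_L(\neg\lambda)\wedge\operatorname{Int}_L(\neg\mu)\bigr)=\operatorname{Cl}_L(\lambda)\vee\operatorname{Cl}_L(\mu)$, by interior item 4 and De Morgan.
\end{itemize}

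I expect no deep obstacle. The only delicate point is consistency of the two descriptions of closure, as an infimum of closed supersets and as $\neg\operatorname{Int}_L(\neg\lambda)$. The paper states these coincide. If needed, I would justify it by taking closed sets to be the negations of open sets: $\eta\geq\lambda$ is closed if and only if $\neg\eta\leq\neg\lambda$ is open, and $\neg$ turns the supremum in the interior into the infimum in the closure. With that settled, the proof consists entirely of the two facts that the interior is the largest open subset and that $\neg$ is an order-reversing involution.
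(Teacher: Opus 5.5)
Your proposal is correct and matches the paper's proof. For item 5 you argue exactly as the paper does: $\operatorname{Int}_L(\lambda)\odot\operatorname{Int}_L(\mu)$ is open by the $\odot$-closure axiom and lies below $\lambda\odot\mu$ by monotonicity of $\odot$, so it lies below $\operatorname{Int}_L(\lambda\odot\mu)$ by maximality of the interior; the paper only cites the literature for the remaining standard interior and closure properties, which you derive explicitly from that maximality and the duality $\operatorname{Cl}_L(\lambda)=\neg\operatorname{Int}_L(\neg\lambda)$.
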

\begin{proof}
The first standard properties of $\operatorname{Int}_{L}$ and $\operatorname{Cl}_{L}$ follow directly from the usual definitions of fuzzy interior and fuzzy closure in a fuzzy topological space (see \cite{lowen1976fuzzy}). Hence, we only prove the additional property involving the {\L}ukasiewicz strong conjunction. Since $\operatorname{Int}_{L}(\lambda)\leq \lambda$
and $\operatorname{Int}_{L}(\mu)\leq \mu$, and since the \L{}ukasiewicz strong conjunction $\odot$ is monotone in both arguments, we get $\operatorname{Int}_{L}(\lambda)\odot\operatorname{Int}_{L}(\mu)\leq\lambda\odot\mu$. Also, $\operatorname{Int}_{L}(\lambda),\operatorname{Int}_{L}(\mu)\in\tau_L$, and $\tau_L$ is closed under $\odot$. Therefore,
$\operatorname{Int}_{L}(\lambda)\odot \operatorname{Int}_{L}(\mu)\in\tau_L$. So $\operatorname{Int}_{L}(\lambda)\odot \operatorname{Int}_{L}(\mu)$ is a \L{}ukasiewicz fuzzy open set contained in $\lambda\odot\mu$. By definition of the interior as the greatest open fuzzy set below $\lambda\odot\mu$, we obtain
\[
\operatorname{Int}_{L}(\lambda)\odot \operatorname{Int}_{L}(\mu)
\leq
\operatorname{Int}_{L}(\lambda\odot\mu).
\]
\end{proof}
The interior properties can be naturally viewed as modal principles when the fuzzy interior operator $\operatorname{Int}$ is replaced by the modal operator $\Box$. Under this interpretation, they correspond closely to the standard axioms of the modal system $S4$. In the same way, the closure properties, when formulated in terms of the diamond operator $\diamondsuit$ instead of the fuzzy closure operator Cl, reflect the dual form of the $S4$ principles. As we show below, each fuzzy topological space therefore determines a semantic model in which the axioms of $S4$ are valid.

\begin{theorem}
Let $\mathsf{LFTop}$ be the class of all \L{}ukasiewicz fuzzy topological spaces. Then the axioms of S4 are valid in every \L{}ukasiewicz fuzzy topological model over $\mathsf{LFTop}$.
\end{theorem}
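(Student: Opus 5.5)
The plan is to fix an arbitrary \L{}ukasiewicz fuzzy topological model $M=(X,\tau_L,V)$ and reduce each axiom to an inequality between fuzzy sets. The key fact, observed after the semantics was defined, is that $\truthset{\varphi\to\psi}{M}{}=1_X$ if and only if $\truthset{\varphi}{M}{}\leq\truthset{\psi}{M}{}$, since $a\Rightarrow_{\L} b=1$ iff $a\leq b$. For a biconditional $\varphi\leftrightarrow\psi$, interpreted as a $\wedge$ of two implications, validity amounts to $\truthset{\varphi}{M}{}=\truthset{\psi}{M}{}$. Because $\Box$ is interpreted by $\operatorname{Int}_L$, each axiom then becomes one of the items of Proposition~\ref{prop:int-closure}, applied to $\lambda=\truthset{p}{M}{}$ and $\mu=\truthset{q}{M}{}$. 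Since these are arbitrary fuzzy sets in $[0,1]^X$, the argument is uniform over all valuations, and hence over all of $\mathsf{LFTop}$.

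The individual axioms go as follows.
\begin{itemize}
\item For $(N)$: $\truthset{\Box\top}{M}{}=\operatorname{Int}_L(1_X)=1_X$ by item 1.
\item For $(T)$: $\operatorname{Int}_L(\lambda)\leq\lambda$ is item 2.
\item For $(4)$: item 3 gives $\operatorname{Int}_L(\lambda)\leq\operatorname{Int}_L(\operatorname{Int}_L(\lambda))$; in fact equality holds.
\item For $(R)$: item 4 gives the equality $\operatorname{Int}_L(\lambda\wedge\mu)=\operatorname{Int}_L(\lambda)\wedge\operatorname{Int}_L(\mu)$, which yields both directions of the biconditional.
\end{itemize}
Dually, the $\Diamond$ versions follow from the closure items via $\operatorname{Cl}_L=\neg\operatorname{Int}_L\neg$ and involutivity of $\neg$.

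I will also check that validity is preserved by the rules. For modus ponens, if $\truthset{\varphi}{M}{}=1_X$ and $\truthset{\varphi}{M}{}\leq\truthset{\psi}{M}{}$, then $\truthset{\psi}{M}{}=1_X$. For substitution, validity in every model over the space quantifies over all valuations, so replacing atoms by formulas just specializes the fuzzy sets $\lambda,\mu$. For monotonicity, if $\truthset{\varphi}{M}{}\leq\truthset{\psi}{M}{}$, then any open set below the former is below the latter, so $\operatorname{Int}_L$ is monotone. Necessitation follows from $\operatorname{Int}_L(1_X)=1_X$.

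The main obstacle is normality, $K$, which the paper singles out as motivating the $\odot$-closure axiom. My approach is to show $\operatorname{Int}_L(\lambda\Rightarrow\mu)\leq\operatorname{Int}_L(\lambda)\Rightarrow\operatorname{Int}_L(\mu)$. By residuation, this is equivalent to $\operatorname{Int}_L(\lambda\Rightarrow\mu)\odot\operatorname{Int}_L(\lambda)\leq\operatorname{Int}_L(\mu)$. Item 5 bounds the left side by $\operatorname{Int}_L((\lambda\Rightarrow\mu)\odot\lambda)$. Since $(a\Rightarrow b)\odot a\leq b$ pointwise, monotonicity of $\operatorname{Int}_L$ finishes the inequality. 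The delicate point is that residuation must be used pointwise for fuzzy sets, and that item 5 genuinely needs closure of $\tau_L$ under $\odot$. I will point this out, so it is clear why $K$ holds here but fails in Chang topologies and in the Kripke example.
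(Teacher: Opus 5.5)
Your proposal is correct and follows essentially the same route as the paper. For $K$ you reduce, by pointwise residuation, to $\operatorname{Int}_L(\lambda\Rightarrow\mu)\odot\operatorname{Int}_L(\lambda)\leq\operatorname{Int}_L(\mu)$ and rely on closure of $\tau_L$ under $\odot$. The only difference is cosmetic: you invoke item 5 of Proposition~\ref{prop:int-closure} together with monotonicity of $\operatorname{Int}_L$, whereas the paper inlines that argument (the $\odot$-product of the two interiors is open and lies below $(\lambda\Rightarrow\mu)\odot\lambda\leq\mu$, hence below $\operatorname{Int}_L(\mu)$). Your explicit derivations of $(N)$, $(T)$, $(4)$, $(R)$ and the rules from the interior properties fill in what the paper only cites from earlier work.
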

\begin{proof}
Let $M=(X,\tau_L,V)$ be an arbitrary \L{}ukasiewicz fuzzy topological model and let $\varphi,\psi\in Form$. All inequalities between fuzzy subsets are understood pointwise. Hence, to show that a formula $\chi$ is valid in $M$, it is enough to show that $\truthset{\chi}{M}{} = 1_X$.

The validity of the principles $(N)$, $(T)$, $(4)$, the meet-preservation
principle $(R)$, Modus Ponens, and the monotonicity rule for $\Box$ follows from the general properties of the fuzzy interior operator. These principles were already established for arbitrary fuzzy topological models in our previous work on fuzzy topological semantics \cite{khan2025fuzzytopologicalsemantics}.
Indeed, they depend only on the standard properties
\[
    \operatorname{Int}(1_X)=1_X,\qquad
    \operatorname{Int}(\lambda)\leq \lambda,\qquad
    \operatorname{Int}(\operatorname{Int}(\lambda))
    =
    \operatorname{Int}(\lambda),
\]
together with the preservation of finite meets and the monotonicity of $\operatorname{Int}$. Therefore, we do not repeat these arguments here. In the present \L{}ukasiewicz setting, the only point requiring a separate verification is the normality axiom $ \Box(\varphi\to\psi)\to(\Box\varphi\to\Box\psi)$. This is because normality involves the interaction between the
\L{}ukasiewicz residuated implication, the fuzzy interior operator, and the
strong conjunction $\odot$. Hence, we now prove the validity of the normality
axiom.\\
Normality principle $\Box(\varphi \to \psi) \to (\Box\varphi \to \Box\psi)$: By {\L}ukasiewicz logic, $A \to (B \to C)$
has value $1$ if $A\leq B\to C$ which means $A \odot B \leq C$.
Therefore, to prove normality, it is enough to show,
\[
\llbracket \Box(\varphi \to \psi) \rrbracket_M
\odot
\llbracket \Box\varphi \rrbracket_M
\leq
\llbracket \Box\psi \rrbracket_M.
\]
To show validity of this expression, we need to verify the following inequality,
\[
\llbracket \Box(\varphi \to \psi) \rrbracket_M
\odot
\llbracket \Box\varphi \rrbracket_M
\leq
\llbracket \Box\psi \rrbracket_M.
\]
By definition, 
$
\llbracket \Box(\varphi \to \psi) \rrbracket_M
=
\operatorname{Int}_{L}(\llbracket \varphi \to \psi \rrbracket_M),
$ 
and $
\llbracket \Box\varphi \rrbracket_M
=
\operatorname{Int}_{L}(\llbracket \varphi \rrbracket_M).
$ 
Both are open fuzzy sets because they are interiors. Since $\tau_L$ is closed under $\odot$, 
$
\llbracket \Box(\varphi \to \psi) \rrbracket_M
\odot
\llbracket \Box\varphi \rrbracket_M
\in \tau_L.
$ 
So, the left-hand side is open. Also, we know:
\[
\operatorname{Int}_{L}(\alpha) \leq \alpha.
\]
Therefore, 
$
\llbracket \Box(\varphi \to \psi) \rrbracket_M
\leq
\llbracket \varphi \to \psi \rrbracket_M,
$ 
and 
$
\llbracket \Box\varphi \rrbracket_M
\leq
\llbracket \varphi \rrbracket_M.
$ 
By monotonicity of $\odot$,
\[
\llbracket \Box(\varphi \to \psi) \rrbracket_M
\odot
\llbracket \Box\varphi \rrbracket_M
\leq
\llbracket \varphi \to \psi \rrbracket_M
\odot
\llbracket \varphi \rrbracket_M.
\]
Pointwise,
\[
\llbracket \varphi \to \psi \rrbracket_M
=
\llbracket \varphi \rrbracket_M
\Rightarrow
\llbracket \psi \rrbracket_M.
\]
The residuation inequality gives, 
$
(a \Rightarrow b) \odot a \leq b.
$ 
So, 
$
\llbracket \varphi \to \psi \rrbracket_M
\odot
\llbracket \varphi \rrbracket_M
\leq
\llbracket \psi \rrbracket_M.
$ 
Therefore, 
$
\llbracket \Box(\varphi \to \psi) \rrbracket_M
\odot
\llbracket \Box\varphi \rrbracket_M
\leq
\llbracket \psi \rrbracket_M.
$ 
Since $\operatorname{Int}_{L}(\llbracket \psi \rrbracket_M)$ is the greatest open fuzzy set below $\llbracket \psi \rrbracket_M$, we get
\[
\llbracket \Box(\varphi \to \psi) \rrbracket_M
\odot
\llbracket \Box\varphi \rrbracket_M
\leq
\operatorname{Int}_{L}(\llbracket \psi \rrbracket_M).
\]
But $\operatorname{Int}_{L}(\llbracket \psi \rrbracket_M)=\llbracket \Box\psi \rrbracket_M$. Hence,
$\llbracket \Box(\varphi \to \psi) \rrbracket_M\odot\llbracket\Box\varphi \rrbracket_M\leq\llbracket \Box\psi \rrbracket_M$. Therefore,
\[
\Box(\varphi \to \psi) \to (\Box\varphi \to \Box\psi)
\].
\end{proof}
\begin{example}
Let $X=\{x,y\}$ and consider the following family of fuzzy subsets of $X$:
\[
\tau=\{(0,0), (0.6,0), (0.8,1), (1,1)\}.
\]
Here, a fuzzy set $A$ on $X$ is written as $A=(A(x),A(y))$. Clearly, $0_X=(0,0)$ and $1_X=(1,1)$ belong to $\tau$. Moreover, since $\tau$ is a chain under the pointwise order, it is closed under finite lattice meets and arbitrary joins. Hence, $\tau$ is a Chang's fuzzy topology on $X$ (see \cite{chang1968fuzzy}). However, $\tau$ is not closed under the {\L}ukasiewicz strong conjunction $\odot$. Now let $M=(X,\tau,V)$ be a fuzzy topological model, where $V(p)=(0.6,0)$ and $V(q)=(0.4,0)$. Since $V(p)=(0.6,0)$ is already open, we have $\truthset{\Box p}{M}{}=\operatorname{Int}(V(p))=(0.6,0)$. On the other hand, the only open fuzzy set below $V(q)=(0.4,0)$ is $(0,0)$. Therefore, $\truthset{\Box q}{M}{}=\operatorname{Int}(V(q))=(0,0)$.
One can easily compute the implication $p\rightarrow q$, 
\[
\truthset{p\to q}{M}{} = (0.8,1).
\]
Since $(0.8,1)\in\tau$, it is already open, and therefore $\truthset{\Box(p\to q)}{M}{} = \operatorname{Int}(\truthset{p\to q}{M}{}) = (0.8,1)$. We now evaluate the $K$-formula
\[
\truthset{\Box(p\to q)\to(\Box p\to\Box q)}{M}{} = (0.6,1)\neq(1,1).
\]
Therefore, the $K$-formula is not valid in $M$. This shows that if the fuzzy topology is not closed under the {\L}ukasiewicz strong conjunction $\odot$, then the normality axiom
\[
\Box(p\to q)\to(\Box p\to\Box q)
\]
cannot be proved in general.
\end{example}

\section{ Conclusion and Future Work}
In this paper, we developed a Łukasiewicz fuzzy topological semantics for fuzzy modal logic. It constructs on joining the algebraic structure of the standard Łukasiewicz algebra to a fuzzy topological reading of the modal operators. Within this setting a formula denotes a fuzzy subset of a space, and the operators $\Box$ and $\Diamond$ act as fuzzy interior and closure operators, respectively. The governing topology is required to be compatible with the strong conjunction $\odot$, so that the non-idempotent behaviour of the Łukasiewicz truth operations is faithfully preserved. The standard $S4$ principles were verified for the resulting class of models. Alongside them, we obtained a compatibility principle particular to Łukasiewicz logic that governs the strong conjunction.\\
These results form part of a wider effort to build fuzzy topological semantics for many-valued modal logics. Because each such logic rests on its own truth-value algebra, the accompanying topological structure has to be tailored to the algebraic operations of its semantics. While our previous work was focused on the Gödel case, the present paper addresses the Łukasiewicz case. Several directions remain open. Completeness, bisimulations, and model equivalence for Łukasiewicz fuzzy topological models all worth further study.

\textbf{Acknowledgments.} {\footnotesize The present study was developed in the scope of the Project ``Agenda ILLIANCE'' [C644919832-00000035 | Project n° 46], financed by PRR - Plano de Recuperação e Resiliência under the Next Generation EU from the European Union, by CIDMA under the Portuguese Foundation for Science and Technology (FCT) Multi-Annual Financing Program for R$\&$D Units, grants UID/4106/2025 and UID/PRR/4106/ 2025 and by the Project Bansky, with reference compete2020-feder-00892000.}

\nocite{*}
\bibliographystyle{eptcs}
\bibliography{references}
\end{document}